 \pgfplotsset{compat=newest}
    \pgfplotsset{plot coordinates/math parser=false}
    \pgfplotsset{
    label style={anchor=near ticklabel},
    xlabel style={yshift=0.0em},
    ylabel style={yshift=-0.3em},
    tick label style={font=\footnotesize },
    label style={font=\footnotesize},
    legend style={font=\footnotesize},
    title style={font=\fontsize{7}}}
\definecolor{iso}{rgb}{0.7,0.7,0.7}
\newcommand{\rippleset}{\mathscr{R}}
\newcommand{\cloudset}{\mathscr{C}}
\newcommand{\Ripple}{\mathtt{R}}
\newcommand{\Cloud}{\mathtt{C}}
\renewcommand{\c}{\mathtt{c}}
\renewcommand{\r}{\mathtt{r}}
\newcommand{\Ripp}{\mathsf{R}}
\newcommand{\ripple}[1]{ \msr{R}_{#1}}
\newcommand{\cloud}[1]{\msr{C}_{#1}}
\newcommand{\ru}{\r_u}
\newcommand{\Ru}{\Ripple_u}
\newcommand{\Cu}{\Cloud_u}
\newcommand{\cu}{\c_u}
\renewcommand{\S}[1]{\mathtt{S}_{#1}}
    \newcommand{\C}[1]{\mathtt{C}_{#1}}
    \renewcommand{\C}[1]{\mathtt{C}_{#1}}
    \newcommand{\N}[1]{\mathsf{N}_{#1}}
    \renewcommand{\N}[1]{\mathsf{N}_{#1}}
\newcommand{\Erv}{\mathtt{A}}
\newcommand{\erv}{\mathtt{a}}
\renewcommand{\b}{\mathtt{b}}
\newcommand{\B}{\mathtt{B}}
\newcommand{\n}{\mathtt{n}}
\renewcommand{\nu}{\n_u}
\newcommand{\bu}{\b_u}
\newcommand{\Bu}{\B_u}
\newtheorem{mydef}{Definition}
\newtheorem{theorem}{Theorem}
\newcommand{\myop}[1]{%
  \mathchoice{\raisebox{8pt}{$\displaystyle #1$}}
             {\raisebox{8pt}{$#1$}}
             {\raisebox{4pt}{$\scriptstyle #1$}}
             {\raisebox{1.6pt}{$\scriptscriptstyle #1$}}}
\newcommand{\paccess}{p}
\newcommand{\pu}{q_u}
\newcommand{\slot}{y} 
\newcommand{\per}{{\mathsf{P}}_{\mathsf{e}}}
\newcommand{\throughput}{\mathsf{T}}
\newcommand{\nuser}{n}
\newcommand{\nslot}{m}
\newcommand{\betamax}{\beta_{\max}}
\begin{document}
\begin{acronym}
\acro{RA}{Random Access}
\acro{LT}{Luby Transform}
\acro{BP}{belief propagation}
\acro{i.i.d.}{independent and identically distributed}
\acro{PER}{Packet Error Rate}
\end{acronym}

\title{Finite-Length Analysis of Frameless ALOHA}

\author{
    \IEEEauthorblockN{Francisco L\'azaro\IEEEauthorrefmark{1}, \v Cedomir Stefanovi\' c\IEEEauthorrefmark{2}\\
    \IEEEauthorblockA{\IEEEauthorrefmark{1}Institute of Communications and Navigation of DLR (German Aerospace Center),
    \\Wessling, Germany. Email: Francisco.LazaroBlasco@dlr.de}\\
    \IEEEauthorblockA{\IEEEauthorrefmark{2}Department of Electronic Systems, Aalborg University
    \\Aalborg, Denmark. Email: cs@es.aau.dk}\\
    \thanks{This work has been accepted for publication at the 11th International {ITG} Conference on Systems, Communications and Coding, SCC 2017.}
\thanks{\copyright 2016 IEEE. Personal use of this material is permitted. Permission
from IEEE must be obtained for all other uses, in any current or future media, including
reprinting /republishing this material for advertising or promotional purposes, creating new
collective works, for resale or redistribution to servers or lists, or reuse of any copyrighted
component of this work in other works}
}
}
\maketitle



\thispagestyle{empty} \pagestyle{empty}

\begin{abstract}
In this paper we present an exact finite-length analysis of frameless ALOHA that is obtained through a dynamical programming approach. Monte Carlo simulations are performed in order to verify the analysis. Two examples are provided that illustrate how the analysis can be used to optimize the parameters of frameless ALOHA.
To the best of the knowledge of the authors, this is the first contribution dealing with an exact finite-length characterization of a protocol from the coded slotted ALOHA family of protocols.
\end{abstract}





\section{Introduction}\label{sec:Intro}

Typical networking scenarios commonly involve uncertainty in terms of the time-instants when the activation of the communicating devices will happen, i.e., when the need for  communication will arise.
In scenarios in which multiple devices share a common medium and an access point (AP), this type of uncertainty is typically resolved via means of a random access protocol.
Slotted ALOHA \cite{R1975} is an example of such protocol, in which devices, upon activation, randomly and independently choose time slots in which they attempt connecting (i.e., contend for the access) to the AP by transmitting their packets.
A collision of two or more packets is typically considered destructive, i.e., all packets involved in a collision are lost. Hence, only slots that contain a single packet (singleton slots) are considered useful and the related packets successfully received.
In the above model, known as collision channel model, the asymptotic maximal throughput of slotted ALOHA, defined as the probability of successfully receiving a packet in a slot, is only $1/e$, implying that most of the slots are wasted.

The introduction of successive interference cancellation in the above framework significantly changed the perspective on the capabilities of random access protocols \cite{CGH2007}.
Namely, assume that users are sending multiple replicas of the same packet when contending, embedding in each replica information that allows to determine the position of all other replicas of the same packet.
A packet that occurs in a singleton slot is successfully received, enabling the AP to identify the slots in which the other replicas occurred and to remove the replicas from the stored waveform using an interference cancellation algorithm, see Fig.~\ref{fig_1}.
Subsequently, some of the collision slots become singletons, promoting the recovery of new packets and the removal of their replicas.
This process is analogous to the iterative belief-propagation decoding of erasure-correcting codes, enabling the use of theory and tools of codes-on-graphs to design and analyze slotted ALOHA-based protocols \cite{L2011}.
In fact, for the collision channel model the asymptotic throughput can be pushed to the ultimate limit of 1 packet per slot \cite{PLC2015}.
The price to pay is that the AP has to buffer the received signal, and it has to employ more complex signal processing, required for the interference cancellation.

\begin{figure}[t]\centering
	\includegraphics[width=0.7\columnwidth]{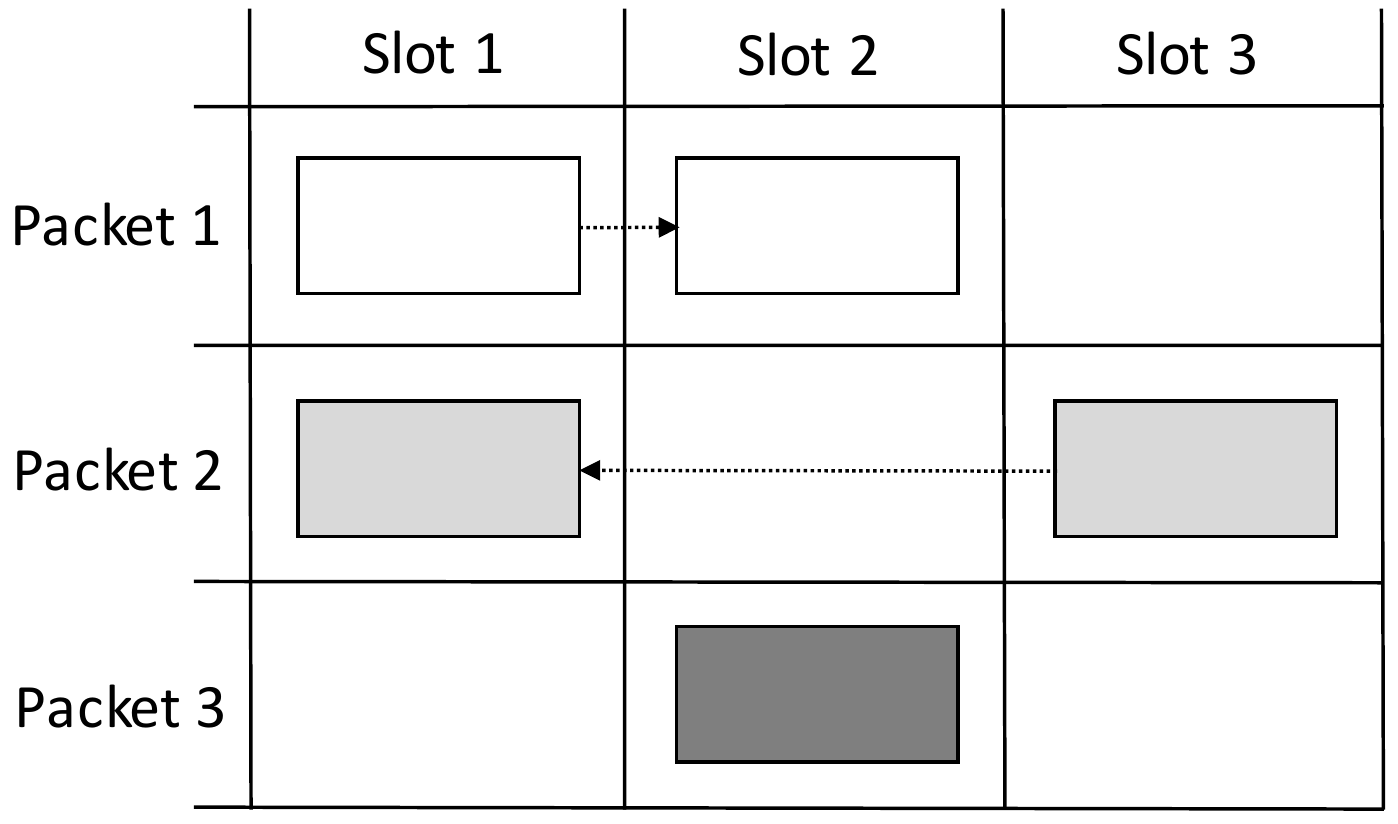}
	\caption{Example of SIC-enabled slotted ALOHA. Packet 2 is received in singleton slot 3, enabling its recovery and the cancellation of its replica from slot 1. In turn, slot 1 becomes singleton and packet 1 is recovered from it, enabling its cancellation from slot 2. Slot 2 becomes singleton and packet 3 is recovered from it.}
	\label{fig_1}
\end{figure}

The results presented in \cite{L2011} inspired a strand of works that applied various concepts from codes-on-graphs to design SIC-enabled slotted ALOHA schemes\cite{PLC2011,LPLC2012,SPV2012,PSLP2014}, which are usually referred to using the umbrella term of coded slotted ALOHA.
In this paper we focus on frameless ALOHA \cite{SPV2012,SP2013}, which exploits ideas originating from the rateless coding framework \cite{luby02:LT}.
In particular, frameless ALOHA is characterized by (i) a contention period that consists of a number of slots that is not defined a priori, but terminated when the number of resolved\footnote{Under user resolution we assume recovery/decoding of user packet.} users and/or instantaneous throughput reach certain thresholds and (ii)
a slot access probability with which a user decides on a slot basis whether to transmit or not its packet replica.
An asymptotic optimization of the slot access probability that maximizes the expected throughput was performed in \cite{SPV2012}.
A joint assessment of the optimal slot access probability and the contention termination criteria in non-asymptotic, i.e., finite-length scenarios were assessed by means of simulations in \cite{SP2013}.
Thus, so far the finite-length performance of frameless ALOHA, like for other SIC-enabled slotted ALOHA protocols, could only be established by means simulations or approximate methods that are usually only accurate in the error floor region, see \cite{ivanov:floor}, for example.

In this paper we build up on the approach for finite-length analysis of rateless codes presented in \cite{Karp2004,lazaro:Allerton2015}, applying it to the context of frameless ALOHA.
Specifically, we present an exact finite-length analysis of frameless ALOHA for the collision channel with successive interference cancellation at the receiver.
This analysis provides the average throughput and the packet error rate, and, to the best of the knowledge of the authors, is the first exact finite-length analysis of a random access protocol belonging to the coded slotted ALOHA family of protocols.
Furthermore, we illustrate by means of examples how the proposed analysis can be used to optimize the parameters of  frameless ALOHA.

The rest of the paper is organized as follows. In Section~\ref{sec:sysmodel} we describe the system model assumed in this paper. In Section~\ref{sec:finiteLA} we present a finite length analysis of frameless ALOHA. Section~\ref{sec:opt} presents two examples in which the presented analysis is used to optimize the parameters of frameless ALOHA in the finite-length regime. Finally, Section~\ref{sec:Conclusions} presents the conclusions.


\section{System Model}\label{sec:sysmodel}

We adopt a simple model of traffic arrivals, focusing on the analysis of the performance of the proposed contention mechanism from the system point of view.
Specifically, we consider a single instance of batch arrival of $\nuser$ users, which contend for the access to a single access point, Fig.~\ref{fig_2}(a).

Contention is performed during a period that consists of $\nslot$ slots, where $\nslot$ is not defined a-priori but determined on the fly.
The users are slot and contention period synchronous, and they arrive prior to the start of the contention period.
A user contends by transmitting replicas of the same packet; for each slot of the contention period a user decides with slot access probability $\paccess$ whether to transmit a replica or not, independently of any other slot and of any other user, see Fig~\ref{fig_2}(b).
%


\begin{figure}[t]
        \centering
        \subfloat[System model] {\includegraphics[width=0.4\columnwidth]{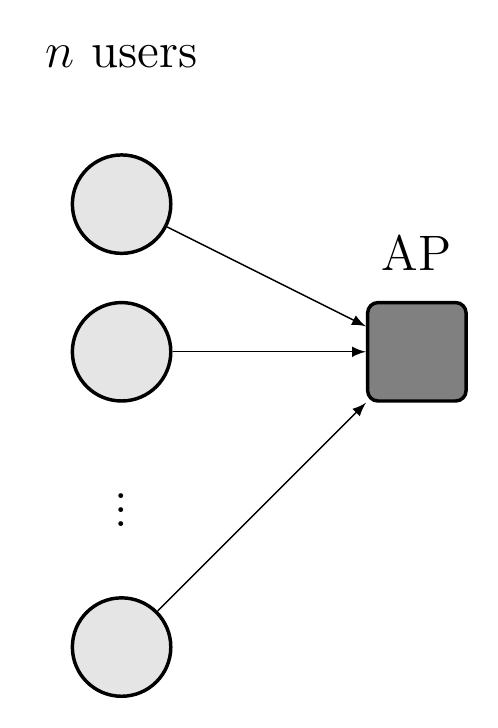}
        \label{fig_2a} }
        \vline
        \subfloat[Contention model] { \includegraphics[width=0.4\columnwidth]{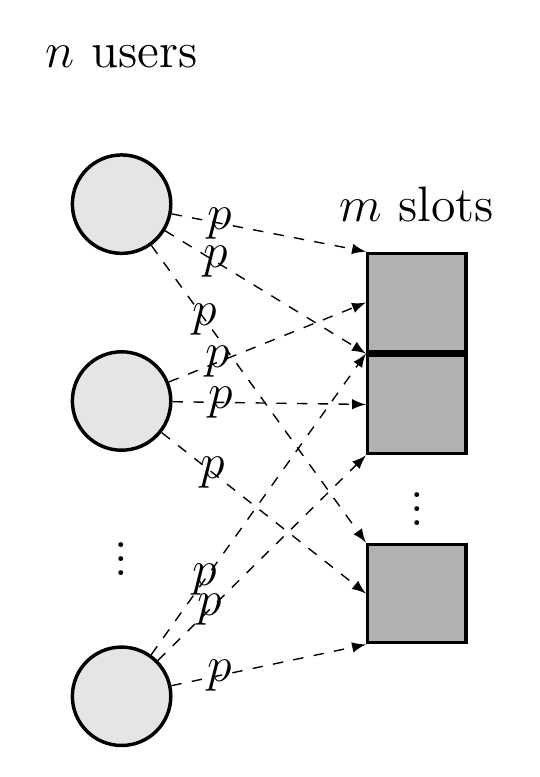}
        \label{fig_2b} }
        \caption{System and contention models assumed in this work.}
        \label{fig_2}
\end{figure}

For the sake of simplicity, we initially assume that $\paccess$ is uniform over users and slots and that
\begin{align}
\paccess = \frac{\beta}{\nuser}
\end{align}
where $\beta$ is a suitably chosen constant.
Thus, the contention on a user basis is modelled as a Bernoulli trial with $\nslot$ repetitions with probability $\paccess$. Hence, the number of replicas that a user sends follows also a binomial distribution with parameters $\nslot$ and $\paccess$.
Denoting as $\Omega= (\Omega_0,\Omega_1,~\Omega_2,~\Omega_3,~\hdots~\Omega_{\nuser})$ the slot degree distribution, where $\Omega_i$ corresponds to the probability of a slot having degree $i$, we have
\[
\Omega_i = \binom{\nuser}{i} \paccess^i (1-\paccess)^{\nuser-i}.
\]
In practice, $\Omega$ can be tightly approximated by a Poisson distribution whenever $\nuser$ takes moderate or large values, i.e.,
\begin{align}
\Omega_i \approx  \frac{( n p )^i}{i!} e^{-np} = \frac{\beta^i}{i !} e^{-\beta}.
\end{align}

We assume a widely adopted collision channel model, so that slots containing only one transmission (singleton slots) can be decoded with probability $1$ and slots containing more
than one transmission are undecodable with probability $1$. Following \cite{L2011} we describe the iterative successive cancellation process at the receiver using a bipartite graph, assuming $\nslot$ is fixed.
By $\mathbf{v}=(v_1,~v_2,\ldots, v_\nuser)$ we denote the $\nuser$ users, and by $\mathbf{\slot}=(\slot_1, \slot_2, \ldots, \slot_\nslot)$ the $\nslot$ slots.
We use the notation $\deg(\slot)$ to refer to the (original) degree of a slot, that is, the number of users that transmitted in the slot.
Furthermore, during the decoding (i.e., SIC) process we introduce the term reduced degree to refer to the number of unresolved user packets that are still present in the slot.
Thus, the reduced degree of a slot will be equal of less than its (original) degree.
The following definitions are used in the finite-length analysis.
\begin{mydef}[Ripple] We define the ripple as the set of singleton slots (reduced degree 1) and we denote it by $\rippleset$.
\end{mydef}
\noindent The cardinality of the ripple is denoted by $\r$ and its associated random variable as $\Ripp$.
\begin{mydef}[Cloud] We define the cloud as the set of slots with reduced degree $d\geq 2$ and we denote it by $\cloudset$.
\end{mydef}
\noindent The cardinality of the cloud is denoted by $\c$ and the corresponding random variable as $\Cloud$.

For illustration, in Fig.~\ref{fig:graph} we provide an example of bipartite graph for $\nuser=4$ users and $\nslot=4$ slots. We can observe how slots $\slot_1$ and $\slot_4$ belong to the ripple and slots $\slot_2$ and $\slot_3$ belong to the cloud.

\begin{figure}[t]\centering
	\includegraphics[width=0.99\columnwidth]{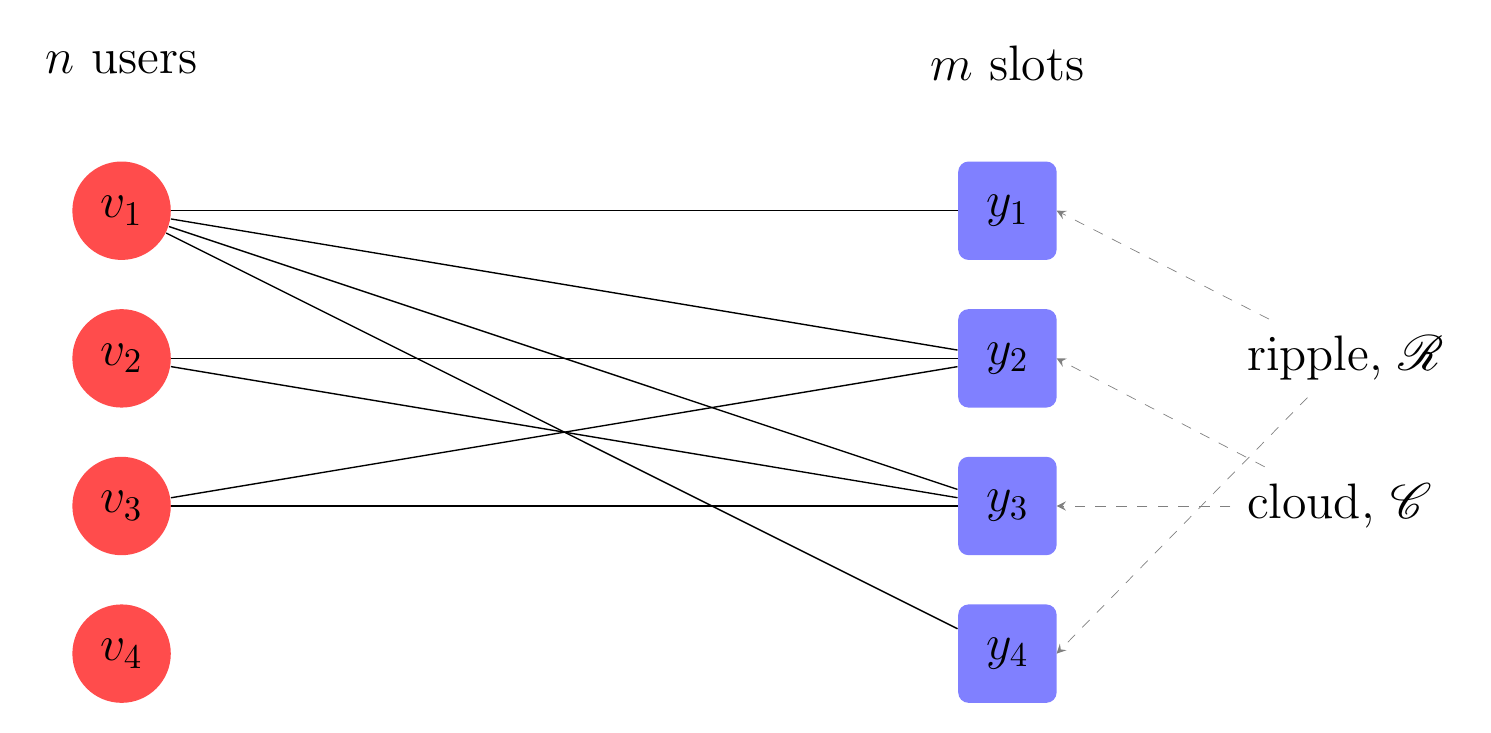}
	\caption{Bipartite graph representation of slotted ALOHA.}
	\label{fig:graph}
\end{figure}

In the next sections, in the ripple and cloud we will add a temporal dimension through the subscript $u$ that corresponds to the number of unresolved users.
Initially, all $\nuser$ users are unresolved, hence $u=\nuser$.
At each step, if the ripple is not empty exactly 1 user gets resolved, and thus the subscript decreases by 1.
After $\nuser$ decoding steps, all users are resolved and decoding ends, i.e., $u=0$.
If at any of the $\nuser$ decoding steps the ripple is empty, decoding fails.


\section{Finite-Length Analysis}
\label{sec:finiteLA}

In this section we follow the approach in \cite{Karp2004,lazaro:Allerton2015} introduced for LT codes and model the iterative successive cancellation process of frameless ALOHA by means of a finite state machine with state
\[
\S{u}:=(\Cu, \Ru )
\]
i.e., the state comprises the cardinalities of the cloud and the ripple at the decoding step  in which $u$ users are unresolved.
The following theorem establishes a recursion that can be used to determine the decoder state distribution.

\begin{theorem}\label{theorem:rec}
Given that the decoder is at state $\S{u}=(\cu,\ru)$, when $u$ users are unresolved and with $\ru>0$, the probability of the decoder being at state $\Pr \{ \S{u-1}=(\c_{u-1}, \r_{u-1}) \}$ when $u-1$ users are unresolved is given by
\begin{align}
\Pr\{\S{u-1}&=(\cu-\bu,\ru-\erv_u+\bu) | \S{u}=(\cu,\ru)\} = \nonumber \\[2mm]
& \binom{\cu}{\bu} {\pu}^{\bu} (1-\pu)^{\cu-\bu} \binom{\ru-1}{\erv_u-1} \,\,\times \nonumber \\[2mm]
&\left(\frac{1}{u}\right)^{\erv_u-1} \left( 1- \frac{1}{u} \right)^{\ru-\erv_u}
\label{eq:prob_transition}
\end{align}
for $0 \leq \bu \leq \cu$, $\erv_u - \bu \leq \ru$ and $\erv_u \geq 1$, and with
\begin{align}
\pu =  \frac{ \mathlarger {\sum}\limits_{d=2}^{\nuser-u+2}   \Omega_d \, d (d-1)\frac{1}{\nuser}  \frac{u-1}{\nuser-1}  \frac{\binom{\nuser-u}{d-2}}{\binom{\nuser-2}{d-2}} }
{  1 -  \mathlarger{\sum}\limits_{d=1}^{\myop{\nuser-u+1}}    \Omega_d \,u \frac{\binom{\nuser-u}{d-1}}{\binom{\nuser}{d}}
-  \mathlarger{\sum}\limits_{d=0}^{\myop{\nuser-u}}   \Omega_d \frac{\binom{\nuser-u}{d}}{\binom{\nuser}{d}}    }.
\label{eq:pu_theorem}
\end{align}

\end{theorem}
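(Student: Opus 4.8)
The plan is to isolate a single decoding step, track how the cloud and the ripple change, and then identify the probability of each elementary transition; the quantity $\pu$ will emerge as the per-slot probability that a cloud slot enters the ripple in that step. Starting from $\S{u}=(\cu,\ru)$ with $\ru>0$, the step draws a singleton slot uniformly at random from the ripple, recovers its unique unresolved user $v^\star$, and cancels every replica of $v^\star$. Two counts determine the next state. Let $\bu$ be the number of cloud slots whose reduced degree drops from $2$ to $1$; since a cloud slot has reduced degree at least $2$, cancelling one user cannot bring it below $1$, so the cloud loses exactly these $\bu$ slots and $\c_{u-1}=\cu-\bu$. Let $\erv_u$ be the number of ripple slots, including the drawn one, whose unique unresolved user is $v^\star$; each of them drops to reduced degree $0$ and leaves the ripple while $\bu$ new slots enter it, so $\r_{u-1}=\ru-\erv_u+\bu$. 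Since a cloud slot joins the ripple if and only if its reduced degree equals $2$ and it contains $v^\star$, this bookkeeping already explains the form of the reachable state and the ranges $0\le\bu\le\cu$, $\erv_u\ge1$, $\erv_u-\bu\le\ru$.

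To obtain the probabilistic form \eqref{eq:prob_transition} I would use the structural fact that makes the analysis exact: in frameless ALOHA each user decides independently, and independently across slots, whether to transmit, so the neighbourhoods of distinct slots are mutually independent. Combined with a deferred-decisions argument, this yields that, conditioned on $\S{u}=(\cu,\ru)$ and on the identity of $v^\star$, (i) each of the $\cu$ cloud slots enters the ripple independently with one common probability $\pu$; (ii) each of the other $\ru-1$ ripple slots has $v^\star$ as its unresolved user independently with probability $1/u$, by exchangeability among the $u$ unresolved users; and (iii) the two families of events are independent. Hence $\bu$ is $\mathrm{Binomial}(\cu,\pu)$, $\erv_u-1$ is $\mathrm{Binomial}(\ru-1,1/u)$, and the product of their mass functions is exactly the right-hand side of \eqref{eq:prob_transition}, once $\pu$ is identified.

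It then remains to compute $\pu$, which by (i) equals the conditional probability that a slot has reduced degree $2$ and contains $v^\star$, given that its reduced degree is at least $2$. I would condition on the original degree $d$ of the slot, which is distributed as $\Omega$, and use that its $d$ users form a uniformly random $d$-subset of the $\nuser$ users, of which $\nuser-u$ are already resolved. Hypergeometric counting gives $\Pr\{\text{reduced degree }0\}=\sum_d\Omega_d\binom{\nuser-u}{d}/\binom{\nuser}{d}$ and $\Pr\{\text{reduced degree }1\}=\sum_d\Omega_d\,u\binom{\nuser-u}{d-1}/\binom{\nuser}{d}$, whose complement is the denominator of \eqref{eq:pu_theorem}; conditioning further on $v^\star$ being one of the $d$ users (probability $d/\nuser$) and on exactly one of the remaining $d-1$ users being unresolved gives $\Pr\{\text{reduced degree }2,\ v^\star\in\text{slot}\}=\sum_d\Omega_d\,\tfrac{d}{\nuser}\,(u-1)\binom{\nuser-u}{d-2}/\binom{\nuser-1}{d-1}$. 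Dividing these two expressions, and rewriting the numerator with the identity $\binom{\nuser-2}{d-2}/\binom{\nuser-1}{d-1}=(d-1)/(\nuser-1)$, produces \eqref{eq:pu_theorem} together with its summation limits $2\le d\le\nuser-u+2$ in the numerator and $d\le\nuser-u+1$, $d\le\nuser-u$ in the two terms of the denominator.

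The main obstacle is the rigorous justification of (i)--(iii): one must show that the pair $(\Cu,\Ru)$, together with the (by symmetry immaterial) set of already-resolved users, is a sufficient statistic for the decoder, so that the still-unexplored part of the bipartite graph stays conditionally uniform and the per-slot updates are conditionally independent. This is the finite-length LT-code argument of \cite{Karp2004,lazaro:Allerton2015} adapted to the present model; I would make it precise by letting the SIC decoder reveal edges lazily and checking that, after each cancellation, the conditional law of the unrevealed graph depends on the history only through the current state. With that in place, the combinatorial evaluation of $\pu$ is routine.
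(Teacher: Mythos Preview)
Your proposal is correct and follows essentially the same route as the paper: the same decomposition into the two independent binomials $\Bu\sim\mathrm{Bin}(\cu,\pu)$ and $\Erv_u-1\sim\mathrm{Bin}(\ru-1,1/u)$, and the same hypergeometric computation of $\pu$ by conditioning on the original slot degree. Your derivation of the numerator via $\tfrac{d}{\nuser}(u-1)\binom{\nuser-u}{d-2}/\binom{\nuser-1}{d-1}$ and then the identity $\binom{\nuser-2}{d-2}/\binom{\nuser-1}{d-1}=(d-1)/(\nuser-1)$ is exactly the paper's expression in a slightly different order; the only notable difference is that you flag the conditional-independence step (your (i)--(iii)) as requiring a deferred-decisions argument, whereas the paper simply asserts it from the fact that each slot chooses its neighbours uniformly at random.
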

\begin{proof}
The proof reduces to analyzing the variation of the cloud and ripple sizes in the transition from $u$ to $u-1$ unresolved users. Since we assume  $\ru>0$, in the transition from $u$ to $u-1$ unresolved users exactly 1 user is resolved. All the edges coming out from the resolved user are erased from the decoding graph. As a consequence some slots might leave the cloud and enter the ripple if their reduced degree becomes one, and other slots will leave the ripple if their reduced degree decreases from 1 to 0.

Let us first focus of the number of slots leaving $\cloud{u}$ and entering $\ripple{u-1}$ in the transition, denoted by $\bu$ and with associated random variable given by $\Bu$. Due to the nature of frameless ALOHA, in the decoding graph every slot chooses its neighbor users uniformly at random and without replacement. Thus, random variable $\Bu$ is binomially distributed with parameters $\c_u$ and $\pu$, being $\pu$ the probability of a generic slot $\slot$ leaving $\cloud{u}$ to enter $\ripple{u-1}$,
\begin{equation}
\pu := \Pr \{ \slot \in \ripple{u-1} | \slot \in \cloud{u} \}= \frac { \Pr \{ \slot \in \ripple{u-1}\, , \, \slot \in \cloud{u} \} }  { \Pr \{ \slot \in \cloud{u} \}}.
\label{eq:pu_prob}
\end{equation}
We shall first focus on the enumerator of \eqref{eq:pu_prob} and we shall condition it to the slot having degree $d$,
${\Pr \{ \slot \in \ripple{u-1}\, , \, \slot \in \cloud{u} | \deg(\slot)= d \}}$. This corresponds to the probability that one of the $d$ edges of  slot $\slot$ is connected to the user being resolved at the transition, one edge is connected to one of the $u-1$ unresolved users after the transition and the remaining $d-2$ edges are connected to the $\nuser-u$ unresolved users before the transition. In other words, slot $\slot$ must have \emph{reduced} degree $2$ \emph{before} the transition and \emph{reduced} degree $1$ \emph{after} the transition.
It is easy to see how this probability corresponds to
\begin{align}
 \Pr \{ \slot \in \ripple{u-1}\, , \, & \slot \in \cloud{u} | \deg(\slot)= d \} = \nonumber \\
& \mkern-65mu \frac{d}{\nuser} (d-1)\frac{u-1}{\nuser-1}  \frac{\binom{\nuser-u}{d-2}}{\binom{\nuser-2}{d-2}}
\label{eq:z_and_l_d}
\end{align}
for $d\geq 2$. In the complementary case, $d < 2$, it is obvious that the slot cannot enter the ripple. Thus, we have
\[
\Pr \{ \slot \in \ripple{u-1}\, , \, \slot \in \cloud{u} | \deg(\slot)= d \} = 0
\]
for $d<2$. 

Let us now concentrate on the denominator of \eqref{eq:pu_prob}, that corresponds to the probability that a slot $\slot$ is in the cloud when $u$ users are still unresolved. This is equivalent to the probability of slot $\slot$ not being in the ripple or having reduced degree zero (all edges connected to resolved users). Hence, we have
\begin{align}
\Pr  \{ \slot \in \cloud{u}\}&=   1 -  \mathlarger{\sum}_{d=1}^{\nuser}  \Omega_d   u\frac{\binom{\nuser-u}{d-1}}{\binom{\nuser}{d}} - \mathlarger{\sum}_{d=0}^{\nuser}  \Omega_d \frac{\binom{\nuser-u}{d}}{\binom{\nuser}{d}}
\label{eq:z}
\end{align}
where the first summation on the right hand side corresponds to the probability of a slot being the ripple, and the second summation corresponds to the probability of a slot having reduced degree zero.
Inserting \eqref{eq:z_and_l_d} and \eqref{eq:z} in \eqref{eq:pu_prob}, the expression of $\pu$ in \eqref{eq:pu_theorem} is obtained, and the variation of size of the cloud, i.e., random variable $\Bu$, is determined.

We focus next on the variation of size of the ripple in the transition from $u$ to $u-1$ resolved users. In the transition some slots enter the ripple ($\bu$ slots) but there are also slots leaving the ripple. Let us denote by $\erv_u$ the number of slots leaving the ripple in the transition from $u$ to $u-1$ unresolved users, and let us refer to the associated random variable as $\Erv_u$. Assuming that the ripple is not empty\footnote{If the ripple is empty, $\ru=0$, no slots can  leave the ripple. Moreover, decoding stops, so there is no transition.}, the decoder will select at uniformly random one slot from the ripple, that we denote as $\slot$. The only neighbour of $\slot$, $c$ will get resolved. All slots in the ripple that are connected to $c$ leave the ripple in the transition. Hence, we have that slot $\slot$ leaves the ripple. Additionally, the remaining $\ru-1$ slots in the ripple will leave the ripple with probability $1/u$, which is the probability that they have $c$ as neighbour. Thus, the probability mass function of $\Erv_u$ is given by
\begin{align}
\Pr\{\Erv_u=\erv_u & |\Ru=\ru\}=\\ &\binom{\ru-1}{\erv_u-1} \left(\frac{1}{u}\right)^{\erv_u-1} \mkern-3mu \left( 1- \frac{1}{u} \right)^{\ru-\erv_u}.
\end{align}
The proof is completed simply by observing that by definition,
\[
{\r_{u-1}= \ru -\erv_u + \bu}
\]
and
\[
\c_{u-1} = \cu - \bu.
\]
\end{proof}

The initial state of the decoder corresponds to a multinomial distribution with $\nslot$ experiments (slots) and three possible outcomes for each experiment, the slot being in the cloud, the ripple or having degree 0, with respective probabilities $(1-\Omega_1 - \Omega_0)$, $\Omega_1$ and $\Omega_0$. Thus, we have
\begin{align}
 \Pr\{\S{\nuser} =(\c_{\nuser},\r_{\nuser}) \} =\\ & \mkern-135mu \frac{\nslot!}{\c_\nuser! \, \r_\nuser!\, (\nslot-\c_\nuser-\r_\nuser)!} \left( 1-\Omega_1- \Omega_0\right)^{\c_{\nuser}}\, \Omega_1^{\r_{\nuser}} \, \Omega_0^{\nslot-\c_\nuser-\r_\nuser}
 \label{eq:init_state}
\end{align}
for all non-negative $\c_{\nuser},\r_{\nuser}$ such that $\c_{\nuser}+\r_{\nuser} \leq \nslot$.

The decoder state probabilities can be determined by initializing the finite state machine according to \eqref{eq:init_state} and applying recursively Theorem~\ref{theorem:rec}.
However, rather than in state probabilities, in random access one is interested in the \ac{PER}, i.e., the probability that a user is not resolved when the decoding process ends, denoted as $\per$. By observing that the decoding process ends at stage $u$ whenever $\ru=0$, leaving $u$ users unresolved and $\nuser-u$ resolved users, we have
\begin{align}
\per = \sum_{u=1}^{\nuser} \sum_{\cu} \frac{u}{\nuser} \Pr\{\S{u} =(\cu,0) \}.
\end{align}
A complementary figure of metric is the throughput, denoted by $\throughput$, which is the number of resolved users normalized by the number of slots. The expression of $\throughput$ corresponds to
\begin{align}
\throughput = \frac{\nuser (1-\per)}{m}=\frac{1-\per}{\nslot/\nuser}.
\label{eq:T}
\end{align}

Fig.~\ref{fig:throughput_test} and Fig.~\ref{fig:per_test} show, respectively, the throughput $\throughput$, and packet error rate $\per$ as a function of $\nslot/\nuser$ for $\beta = 2.5$ and $\nuser=100$. The figures show the outcome of the finite state machine analysis as well as the outcome of simulations. The simulation results were obtained by simulating 10000 contention periods. It can be observed how the match is exact down to the simulation error.

\begin{figure}
    \centering
	\includegraphics[height=7.7cm]{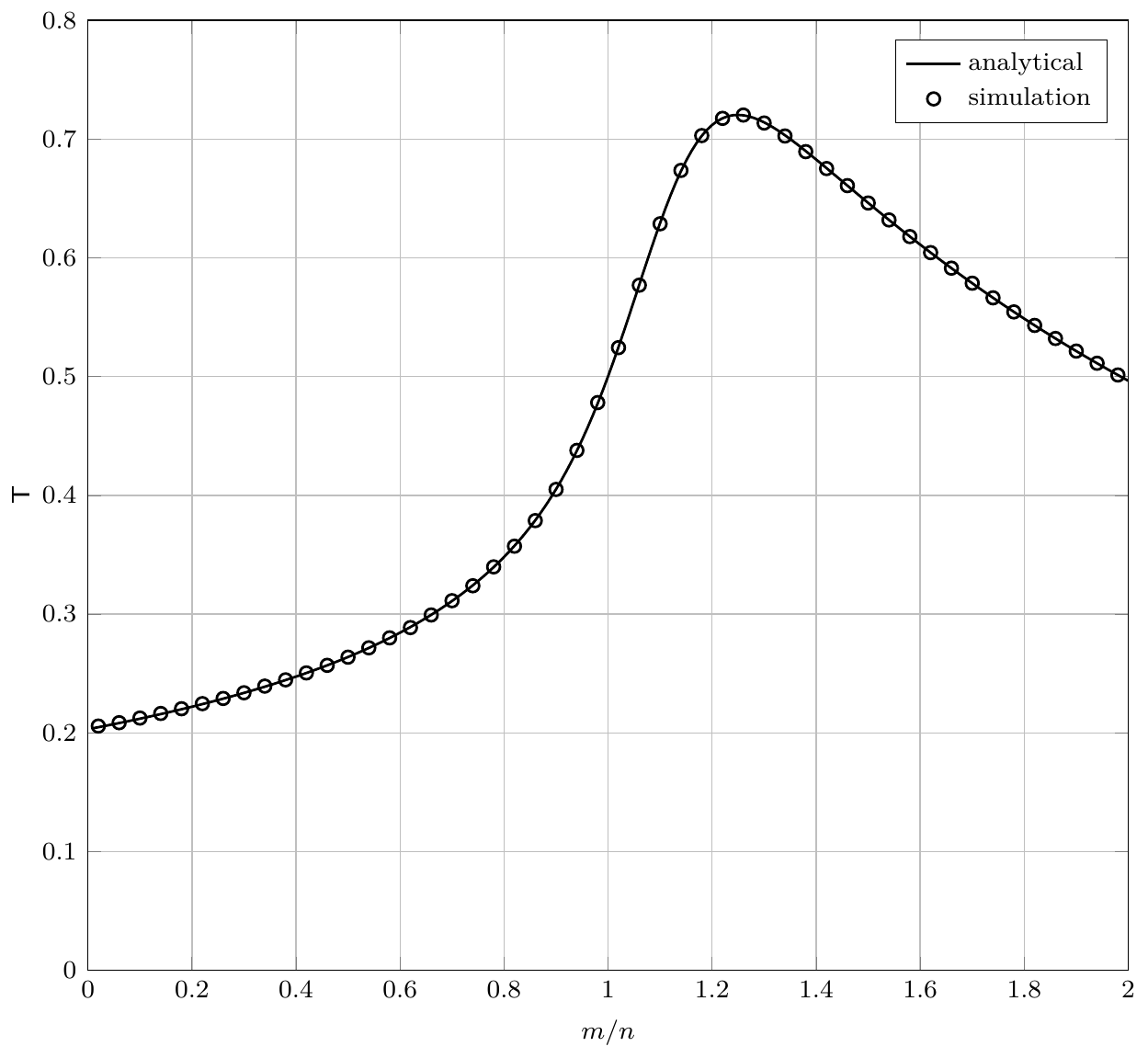}
	\caption{Throughput $\throughput$ as a function of $\nslot/\nuser$ for $\beta = 2.5$, for $\nuser=100$. The solid line represents was obtained using the finite state machine analysis and the markers are the outcome of Monte Carlo simulations.}
	\label{fig:throughput_test}
\end{figure}

\begin{figure}
\centering
	\includegraphics[height=7.7cm]{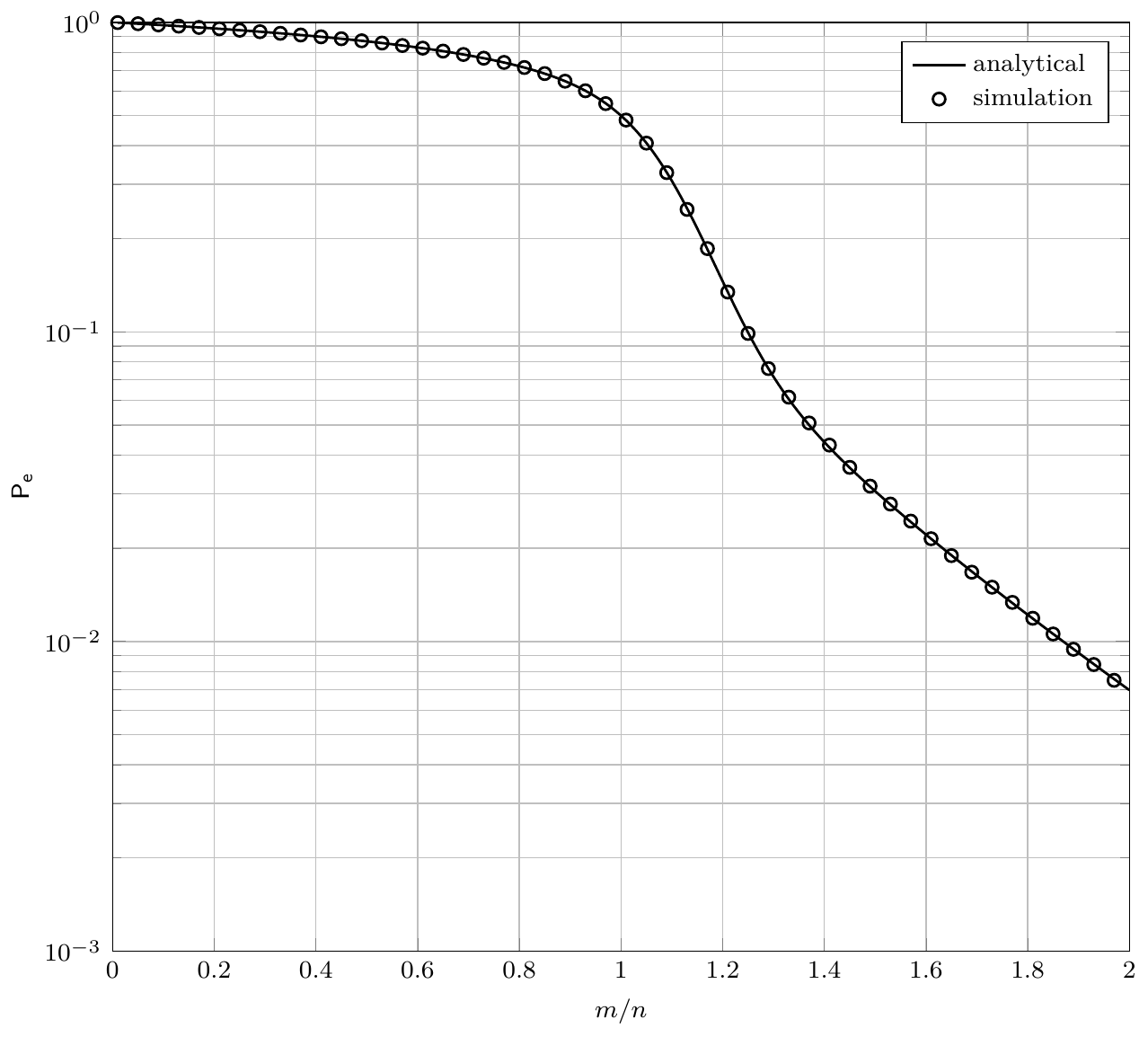}
	\caption{Packet error rate $\per$ as a function of $\nslot/\nuser$ for $\beta = 2.5$, for $\nuser=100$. The solid line represents was obtained using the finite state machine analysis and the markers are the outcome of Monte Carlo simulations.}
	\label{fig:per_test}
\end{figure}


\section{Optimization}\label{sec:opt}
As shown in \cite{SPV2012}, the throughput $\throughput$ of frameless ALOHA typically shows a behavior like the one in Fig.~\ref{fig:throughput_test}, where three distinct phases can be recognised.
Initially, $\throughput$ increases slowly with $\nslot/\nuser$.
When $\nslot/\nuser$ becomes close to $1$, the successive interference cancellation process kicks in and $\throughput$ increases sharply till a maximum value is achieved.
In the third phase, the throughput $\throughput$ starts decreasing.
This decrease is caused by the fact that $\per $ gets close to $0$, see Fig.~\ref{fig:per_test}.
Thus, when $\nslot/\nuser$ increases the throughput $\throughput$ decreases approximately as $1/(\nslot/\nuser)$.
This last phase coincides with the error floor of the packet error rate $\per$, as depicted in Fig.~\ref{fig:per_test}.

In some applications it is desirable to maximize the (overall) throughput $\throughput$ in order to use the channel as efficiently as possible.
However, in other settings it might be more important to minimize the packet error rate $\per$ for the given ratio of $\nslot / \nuser$.
In the following we provide two examples that illustrate how the analysis presented in Section~\ref{sec:finiteLA} can be used to customise the frameless ALOHA protocol. 

\subsection{Peak Throughput Maximization}\label{sec:opt_max_peak}

In this section we use the finite-length analysis to find the value of $\beta$ that maximizes the peak throughput $\throughput$, which we refer to as $\beta_{\max}$.
Formally, given $\nuser$ and treating the throughput as a bivariate function of $\beta$ and $\nslot/\nuser$, i.e., $\throughput(\beta,\nslot/\nuser)$, we define $\beta_ {\max}$ as:
\[
\betamax = \argmax_{\beta}~ \throughput_{\max} (\beta)
\]
where
\[
\throughput_{\max} (\beta)= \max_{\nslot/\nuser}  \throughput(\beta,\nslot/\nuser).
\]

The optimization was carried out for $\nuser=50,100$ and $200$. The result of the optimization can be found in Table~\ref{tab:params_peak}, where
$\nslot_{\max}$ is the number of slots that maximizes the throughput for $\beta= \betamax$.
These results are inline with the results obtained by means of simulation in \cite{SPV2012}; in particular, the trend is that  $ \betamax$ increases with $\nuser$, where the optimal $\betamax \approx 3.1$ when $n \rightarrow \infty$ \cite{SPV2012}.

\begin{figure}
\centering
	\includegraphics[height=7.6cm]{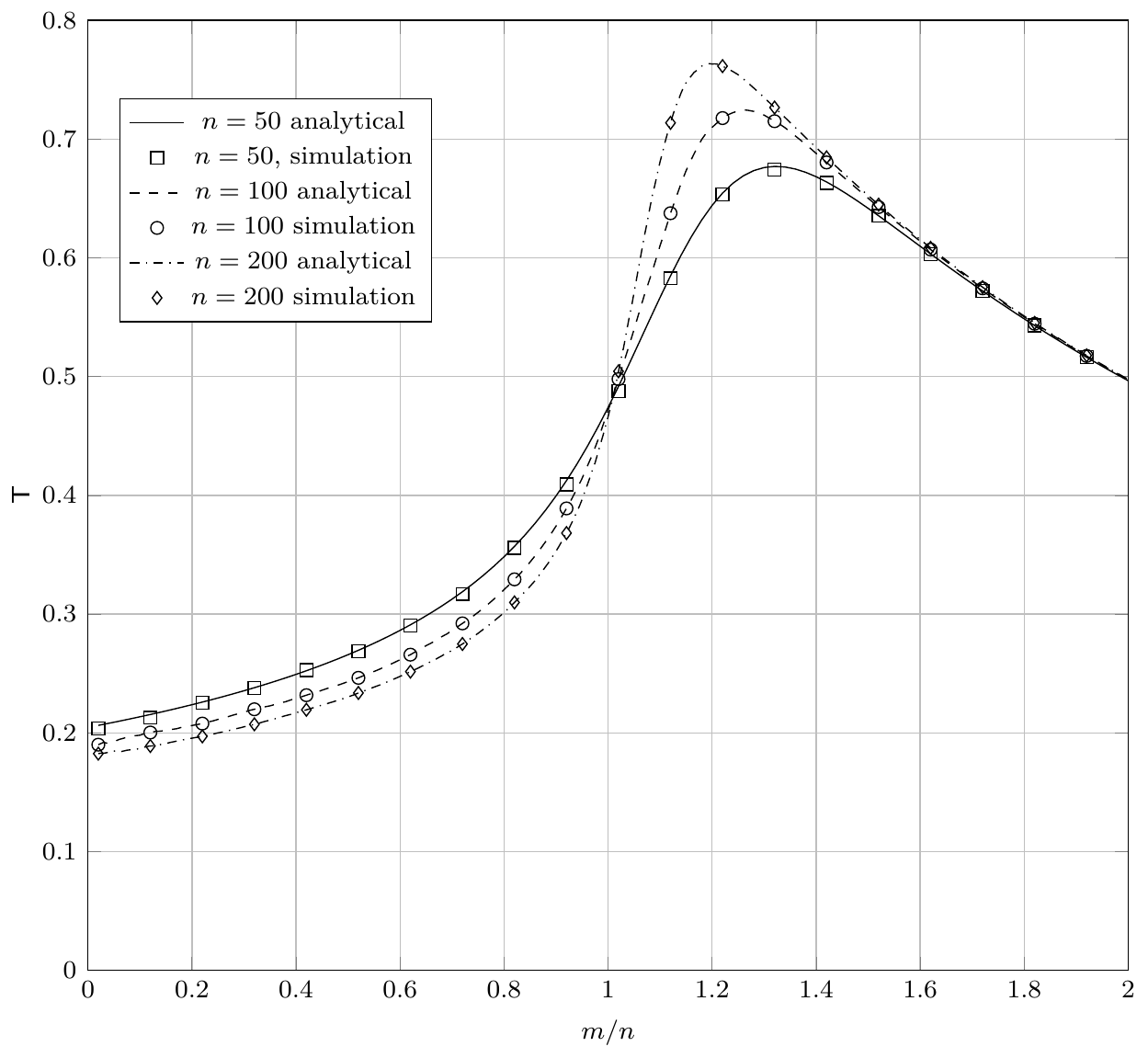}
	\caption{Throughput $\throughput$ as a function of $\nslot/\nuser$ for $\nuser=50, 100$ and $200$ and $\beta=\betamax$. The lines represent the outcome of the finite-length analysis and the markers the result of Monte Carlo simulations. }
	\label{fig:throughput_opt}
\end{figure}

\begin{table}[t]
\centering
\caption{Optimal parameters for frameless ALOHA}
    \begin{tabular}{|c|c|c|c|}
    \hline
    $\nuser$        & 50 & 100 & 200 \\ \hline \hline
    $\betamax$     & 2.47 & 2.62 & 2.71 \\ \hline
    $\throughput_{\max} (\betamax)$ & 0.67 & 0.72 & 0.76 \\ \hline
    $\nslot_{\max}$  & 66 & 126 & 240 \\ \hline
    \end{tabular}
\label{tab:params_peak}
\end{table}

\begin{figure}
\centering
	\includegraphics[height=7.6cm]{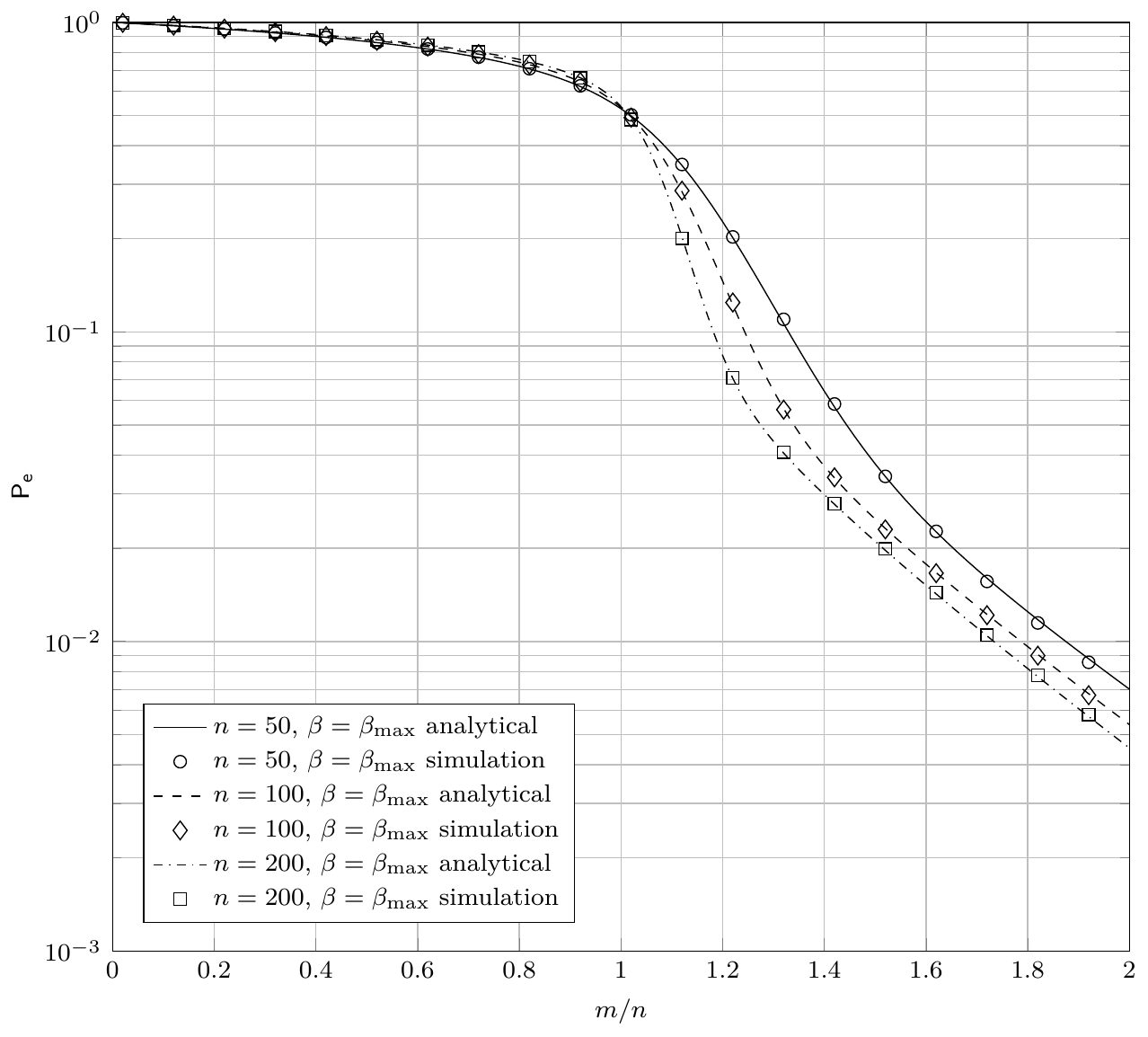}
	\caption{Throughput $\throughput$as a function of $\nslot/\nuser$ for $\nuser=50, 100$ and $200$ and $\beta=\betamax$. The lines represent the outcome of the finite-length analysis and the markers the result of Monte Carlo simulations. }
	\label{fig:per_opt_peak}
\end{figure}

For illustration, we also present the evolution of the throughput $\throughput$ obtained for $\beta=\betamax$, and $\nuser=50,100$ and $200$ in Fig.~\ref{fig:throughput_opt}, where the corresponding loci of the maxima are given in Table~\ref{tab:params_peak}. The figure shows the outcome of the finite-length analysis and also the result of Monte Carlo simulations. Concretely, for each value of $\nuser$ and $\nslot$ 10000 contention periods were simulated.
The related evolution of the PER is shown in Fig.~\ref{fig:per_opt_peak}, 
we can observe how when $\betamax$ is used $\per$ presents a rather high error floor, implying that maximization of the peak throughput does not guarantee a favorable PER performance.
Specifically, a lower bound on PER is given by
\begin{align}
\per \geq { \nslot \choose \nuser } \left( 1 - \frac{\beta}{n} \right)^{\nslot} 
\label{eq:lower_bound}
\end{align}
where the expression on the right-hand side refers to the probability that a user does not transmit at all in any of the slots. For large $\nslot$ one can approximate the lower bound in \eqref{eq:lower_bound} as
\begin{align}
\per \gtrsim e^{ - \beta \frac{\nslot}{\nuser}}
\label{eq:lower_bound1}
\end{align}
Thus, if we fix $\nslot/\nuser$, the error floor can be lowered by increasing $\beta$.
In the following section, we show how this error floor can be lowered substantially by increasing $\beta$ after the peak throughput has been reached.

\subsection{Pushing Down the Error Floor}\label{sec:opt_min_per}

In this section we consider frameless ALOHA schemes where the parameter $\beta$ varies with the slot number.
Concretely, we consider
\begin{equation}
\beta =
\begin{cases}
\beta_1, \qquad  \text{if } \nslot \leq \nslot^* \\
\beta_2, \qquad  \text{if } \nslot > \nslot^* \\
\end{cases}.
\end{equation}
Thus, the slot access probability also depends on the slot number
\begin{equation}
\paccess =
\begin{cases}
\paccess_1, \qquad  \text{if } \nslot \leq \nslot^* \\
\paccess_2, \qquad  \text{if } \nslot > \nslot^* \\
\end{cases}
\end{equation}
with $\paccess_1= \beta_1/\nuser$ and $\paccess_2= \beta_2/\nuser$.
As a consequence, the slot degree distribution will also vary with the slot number. For ${\nslot \leq \nslot^* }$
we have
\[
\Omega_i= \binom{\nuser}{i} \paccess_1^i (1-\paccess_1)^{\nuser-i}
\]
whereas for $\nslot > \nslot^*$ we have
\[
\Omega_i= \frac{\nslot^*}{\nslot} \binom{\nuser}{i} \paccess_1^i (1-\paccess_1)^{\nuser-i} + \frac{\nslot-\nslot^*}{\nslot} \binom{\nuser}{i} \paccess_2^i (1-\paccess_2)^{\nuser-i}.
\]

The scheme considered in this section cannot be modelled exactly using Theorem~\ref{theorem:rec}, as Theorem~\ref{theorem:rec}  models the case where the slot access probability,  $\paccess$, is chosen independently at random for every slot, while in the considered scheme, $\paccess$ depends on the slot number.
However, the expected performance of the scheme derived through simulations and through Theorem~\ref{theorem:rec} matches down to simulation error for the examples considered in this paper.

In order to illustrate the usefulness of the finite-length analysis derived in Section~\ref{sec:finiteLA}, we shall use the analysis to minimize the \ac{PER}, $\per$. We will consider frameless ALOHA schemes in which in $\beta_1=\betamax$ and $\nslot^*= \nslot_{\max}$. The parameter $\beta_2$ is then chosen as the one that minimizes the packet error rate $\per$ at $\nslot/\nuser =2$.  The optimization was carried out for $\nuser=50, 100$ and $200$. The parameters obtained through the optimization can be found in Table~\ref{tab:params}.
It can be observed that $\beta_2$ grows approximately logarithmically  with $\nslot$. This result is inline with known results from LT code literature, where, in order for LT decoding to be successful with high probability, the average output degree (the equivalent of $\beta$ for LT codes) needs to grow as $\mathcal{O}(\log(k))$, where $k$ is the number of input symbols \cite{luby02:LT}.

Fig.~\ref{fig:per_opt} shows the \ac{PER} for frameless ALOHA schemes optimized for peak throughput (see previous Section) and schemes with two phases, where in the second phase $\beta_2$ is chosen to minimize $\per$. It can be observed how the \ac{PER} becomes substantially lower in this second phase.
For clarity of presentation, Fig.~\ref{fig:per_opt} shows only the \ac{PER} values obtained using the finite-length analysis.
We stress again that Monte Carlo simulations were carried out in order to verify the results and, although the model is not exact in this case, the match is as good as in the rest of the examples in this paper.

\begin{figure}[t]
\centering
	\includegraphics[height=7.5cm]{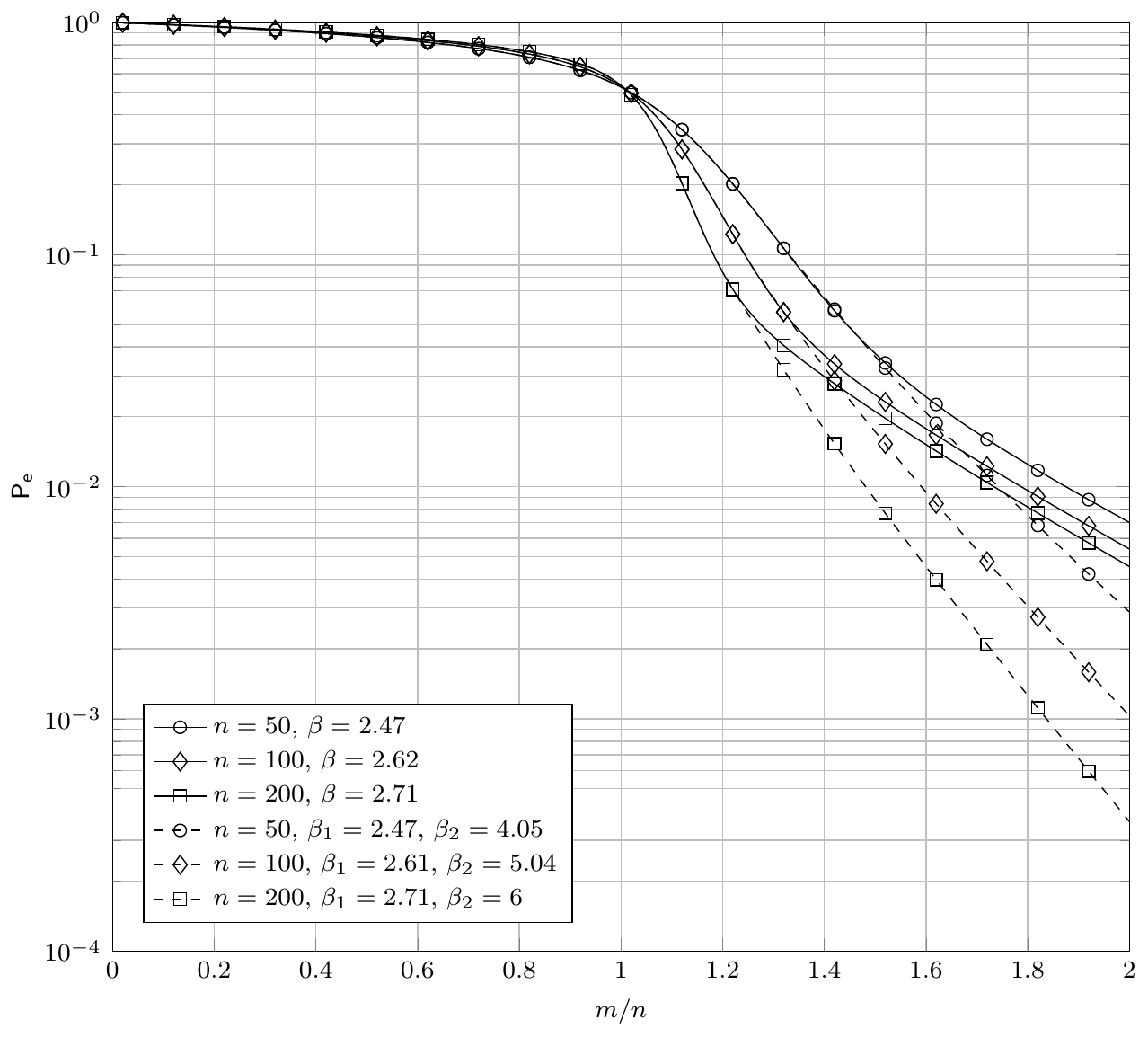}
	\caption{Throughput $\throughput$ as a function of $\nslot/\nuser$ for $\nuser=50, 100$ and $200$. The solid lines represent the result for $\beta=2.5$, the dashed lines represent the result for frameless ALOHA schemes with an initial phase with ${ \beta_1=\betamax }$ and a second phase with $\beta_2$ optimized to minimize $\per$ at ${\nslot = 2 \cdot \nuser}$. }
	\label{fig:per_opt}
\end{figure}

\begin{table}[t]
\centering
\caption{Optimal parameters for two-stage frameless ALOHA}
    \begin{tabular}{|c|c|c|c|}
    \hline
    $\nuser$ & 50 & 100 & 200 \\ \hline \hline
    $\beta_1$ & 2.47 & 2.62 & 2.71 \\ \hline
    $\beta_2$ & 4.05 & 5.04 & 6 \\ \hline
    $\nslot^*$& 66 & 126 & 240 \\ \hline

    \end{tabular}
\label{tab:params}
\end{table}

The scenario chosen for this optimization, trying to maximize the peak throughput in the first phase and then in the second phase pushing down the error floor, might seem somewhat arbitrary. However, it illustrates the fact that the proposed analysis can be used in many ways to tailor frameless ALOHA to specific needs. For example one could define 3 or more phases or even change $\beta$, and hence $\paccess$, slot by slot.

\section{Conclusions}\label{sec:Conclusions}

In this paper we have presented an exact finite-length analysis of frameless ALOHA based on dynamical programming approach. The analysis builds up on results from finite-length analysis of rateless codes. In contrast to most of the works in literature, the analysis applies not only in the error floor region but also in the waterfall region. Simulations have been performed to verify the analysis.  Furthermore, two examples have been presented that illustrate how the proposed analysis can be used to optimize the parameters of frameless ALOHA.

\section*{Acknowledgement}
The research presented in the paper was supported in part by the Danish Council for Independent Research, Grant No. DFF-4005-00281 ``Evolving wireless cellular systems for smart grid communications''.

\bibliographystyle{IEEEtran}
\bibliography{IEEEabrv,references}


\end{document}